\documentclass[letterpaper]{article}

\usepackage[preprint]{aaai2027}

\usepackage[hyphens]{url}
\usepackage{graphicx}
\usepackage{natbib}
\usepackage{caption}
\usepackage{algorithm}
\usepackage{algorithmic}
\usepackage{amsthm}
\usepackage{amssymb}
\usepackage{amsmath}
\usepackage{subcaption}
\usepackage{booktabs}

\usepackage{newfloat}
\usepackage{listings}
\DeclareCaptionStyle{ruled}{labelfont=normalfont,labelsep=colon,strut=off}
\floatstyle{ruled}
\newfloat{listing}{tb}{lst}{}
\floatname{listing}{Listing}

\usepackage{docmute}

\makeatletter
\let\arxiv@original@xfloat\@xfloat
\def\@xfloat#1[#2]{%
  \def\arxiv@float@opt{#2}%
  \def\arxiv@float@H{H}%
  \ifx\arxiv@float@opt\arxiv@float@H
    \arxiv@original@xfloat{#1}[t]%
  \else
    \arxiv@original@xfloat{#1}[#2]%
  \fi
}
\makeatother

\newtheorem{theorem}{Theorem}
\newtheorem{lemma}{Lemma}
\newtheorem{remark}{Remark}
\newtheorem{claim}{Claim}

\newcommand{\ALG}{\mathrm{ALG}}
\newcommand{\OPT}{\mathrm{OPT}}
\newcommand{\ADVICE}{\mathrm{ADVICE}}

\title{Learning-Augmented Algorithms for Online Vertex Cover}
\author{
Tianhang Lu,
Runtian Ren,
Shengcai Liu
}
\affiliations{
Guangdong Provincial Key Laboratory of Brain-Inspired Intelligent Computation,\\
Department of Computer Science and Engineering,\\
Southern University of Science and Technology, Shenzhen 518055, China\\
liusc3@sustech.edu.cn
}

\begin{document}

\let\arxivOriginalBibliography\bibliography
\renewcommand{\bibliography}[1]{}


\maketitle

\begin{abstract}
This paper studies learning-augmented online weighted vertex cover with local advice and a tradeoff parameter $\lambda \in (0,1)$.
We consider two graph settings: bipartite graphs and general graphs.
In both settings, the online algorithm must maintain a feasible vertex cover under irrevocable decisions.
We show that these problems admit the same robustness--consistency tradeoffs as learning-augmented ski rental.
For the bipartite graph model, we give a randomized algorithm that is $\frac{1}{1-e^{-\lambda}}$-robust and $\frac{\lambda}{1-e^{-\lambda}}$-consistent.
For the general graph model, we give a deterministic algorithm that is $(1+\frac{1}{\lambda})$-robust and $(1+\lambda)$-consistent.
We prove that the tradeoffs above are optimal in both settings.
We also validate the proposed algorithms through experiments on synthetic and real-world datasets.
\end{abstract}

\section{Introduction}
Online weighted vertex cover is a fundamental online covering problem with applications in link monitoring~\cite{yigit2021breadth}, multiprocessor scheduling~\cite{epstein2016vertex}, and the deployment of monitoring devices in urban road networks~\cite{gusev2020vertex}.
In the vertex-arrival model, the graph is revealed over time.
When a vertex arrives, the algorithm observes its edges to previously revealed vertices, and must immediately maintain a feasible vertex cover for all revealed edges.
Selecting a vertex incurs its weight, and the decision is irrevocable.
The objective is to minimize the total weight of the selected vertices, relative to an optimal offline solution that knows the entire graph.

Worst-case competitive analysis provides guarantees against adversarial arrivals, but it can be overly conservative when useful predictive information is available.
Learning-augmented algorithms address this limitation by allowing algorithms to exploit predictions while guaranteeing the worst-case performance ratio when the predictions are unreliable~\cite{purohit2018improving}.
The two standard requirements in the learning-augmented setting are \textbf{robustness} and \textbf{consistency}: robustness evaluates the worst-case competitive ratio under arbitrary predictions, whereas consistency evaluates the improved performance guarantee when the predictive information is accurate.

Our advice model reflects how machine-learning predictions can be used in an online setting.
Specifically, at each arrival, the advice chooses between two canonical covering actions: selecting $v$ or selecting all vertices in its revealed neighborhood $N(v)$.
The advice bit $a_v$ specifies the preferred action, yielding a binary prediction interface.
Moreover, any sequence of advice bits induces a feasible vertex cover, so prediction errors affect only the solution cost, not feasibility.

Specifically, an algorithm is said to be $r$-robust if its expected cost is at most $r \cdot \OPT$ where $\OPT$ is the cost of an optimal offline vertex cover.
Similarly, an algorithm is said to be $c$-consistent if its expected cost is at most $c \cdot \ADVICE$ where $\ADVICE$ is the cost of the feasible vertex cover induced by the advice.
This benchmark is well aligned with the learning-augmented framework, where the prediction may itself specify a feasible solution rather than merely provide a scalar estimate of the future, as in~\cite{spaeh2023online,choo2026learning}.
In our setting, the advice directly defines a feasible covering policy, and therefore $\ADVICE$ is the natural reference solution for measuring consistency.
Moreover, this notion subsumes the perfect-advice benchmark: if the advice-induced cover coincides with an offline optimum, then consistency with respect to $\ADVICE$ reduces exactly to the usual $\OPT$-based consistency guarantee.

Online bipartite vertex cover generalizes ski rental~\cite{wang2015two}: a ski-rental instance can be represented by a star graph.
Selecting the offline center corresponds to buying skis, whereas selecting the arriving leaves corresponds to renting skis each day.
One previous work by Purohit et al.~\shortcite{purohit2018improving} studied ski rental in the learning-augmented setting with a tradeoff parameter $\lambda \in (0,1)$ and a prediction of the season length.
They proposed a deterministic algorithm with $(1+\frac{1}{\lambda})$-robustness and $(1+\lambda)$-consistency, and a randomized algorithm with $\frac{1}{1-e^{-\lambda}}$-robustness and $\frac{\lambda}{1-e^{-\lambda}}$-consistency.
Here, $\lambda \in (0,1)$ denotes a tradeoff parameter.
Later, Wei and Zhang~\shortcite{wei2020optimal} proved that these robustness--consistency tradeoffs are tight.
In ski rental, consistency is evaluated under perfect predictions and therefore against $\OPT$.
Under the star representation above, this corresponds to the special case in which the induced advice solution is optimal.

Online vertex cover also falls within the broader framework of online set covering.
Existing learning-augmented guarantees for general online covering problems are typically non-constant, often scaling logarithmically with parameters of the input.
For example, Bamas et al.~\shortcite{bamas2020primal} gave an $O(\log(d/\lambda))$-robust and $O(1/(1-\lambda))$-consistent algorithm for online set cover, where $d$ denotes the maximum number of sets that cover any element.
It is worth noting that the robustness and consistency guarantees of such algorithms are weaker than those achieved for the ski rental problem.
As a result, this raises the following natural question:
\begin{quote}
\textit{Can the robustness and consistency known for ski rental be achieved for online vertex cover?}
\end{quote}

\paragraph{Our Contributions.} We answer this question affirmatively for two online vertex cover models.
In the bipartite setting, offline vertices are known in advance, online vertices arrive one by one, and every revealed edge connects the arriving vertex to an offline vertex.
In the general setting, the offline side may be empty, and each arriving vertex may be adjacent to either offline vertices or previously revealed online vertices.
The contributions of this work are as follows.
\begin{enumerate}
    \item We propose a randomized learning-augmented online algorithm for vertex cover on bipartite graphs, with $\frac{1}{1-e^{-\lambda}}$-robustness and $\frac{\lambda}{1-e^{-\lambda}}$-consistency.
    \item We propose a deterministic learning-augmented online algorithm for vertex cover on general graphs, with $(1+\frac{1}{\lambda})$-robustness and $(1+\lambda)$-consistency.
    \item We prove that the tradeoffs above are optimal for randomized algorithms in the bipartite setting and for deterministic algorithms in the general graph setting.

    \item We conduct numerical experiments to evaluate the performance of our algorithms on both synthetic datasets and real-world datasets.
    The results demonstrate that the proposed algorithms outperform existing learning-augmented algorithms and achieve performance comparable to that of the algorithm with the optimal competitive ratio in most cases.
\end{enumerate}

\paragraph{Our Techniques.} For the bipartite graph model, we extend the water-filling algorithm of Wang and Wong~\shortcite{wang2015two} with an advice-dependent threshold function.
The threshold increases with the water level but decreases when the arriving vertex is recommended, so the algorithm invests less in its neighbors.
Otherwise, it raises the threshold and invests more in them. This design balances robustness with consistency.
For the general-graph model, the challenge is that a previously revealed vertex may later become recommended through the advice of a newly arriving neighbor.
Building on the primal-dual framework, we assign each vertex a dynamic trust threshold.
When a later advice bit recommends selecting an already revealed vertex, its threshold is lowered accordingly. This allows the algorithm to adapt to evolving advice while preserving the desired guarantees.

\section{Related Work}\label{sec:related-work}
The theory of learning-augmented algorithms incorporates unreliable advice into algorithm design and characterizes how algorithmic performance varies with the quality of the advice.
It has been widely studied for many fundamental problems (see the survey~\cite{mitzenmacher2022algorithms}), such as paging~\cite{lykouris2021competitive}, matching~\cite{antoniadis2020secretary,choo2026learning,burathep2026learning}, and set cover~\cite{bamas2020primal,grigorescu2022learning}.
We next review prior work on learning-augmented ski rental, online covering, and online vertex cover.

\paragraph{Learning-Augmented Ski Rental Problem.}
The ski rental problem studies how an online decision maker should trade off continuing to rent against purchasing outright when the usage duration is unknown.
It can be viewed as a special case of online vertex cover~\cite{wang2015two}.
Under the classical competitive-analysis framework, the optimal competitive ratio is $2$ for deterministic algorithms~\cite{boyar2022relaxing} and $e/(e-1)$ for randomized algorithms~\cite{karlin1994competitive}.
However, such worst-case analysis does not exploit the historical data or machine-learning predictions that are often available in practical settings.
To overcome this limitation, Purohit et al.~\shortcite{purohit2018improving} introduced machine-learned predictions into the ski rental problem, following the basic principle of learning-augmented algorithms: an algorithm achieves improved performance when the predictions are accurate, while maintaining controlled robustness (i.e., competitiveness) when the predictions are erroneous.
Wei and Zhang~\shortcite{wei2020optimal} later established tight lower bounds for the consistency--robustness tradeoff in classical learning-augmented ski rental.
Other previous works further extended this line of research to settings with multiple expert predictions~\cite{gollapudi2019online}, multi-shop ski rental~\cite{wang2020online}, and variants such as multi-option ski rental~\cite{shin2023improved} and two-level ski rental~\cite{zhang2024learning}.

\paragraph{Learning-Augmented Online Covering Problem.}
A number of recent works have studied learning-augmented algorithms for online set cover and related online covering problems.
One common setting assumes that the prediction itself constitutes a feasible solution, together with a tradeoff parameter $\lambda \in (0,1)$.
Bamas et al.~\shortcite{bamas2020primal} developed a primal-dual framework for the online set cover problem and obtained an $O(\log(d/\lambda))$-robust and $O(1/(1-\lambda))$-consistent algorithm, where $d$ denotes the maximum number of sets that cover one element.
Grigorescu et al.~\shortcite{grigorescu2022learning} studied online covering LPs and SDPs with fractional predictions.
Under a relaxation that allows a polynomial number of constraints to be violated, they obtained an $O(\log(n/\lambda))$-robust and $O(1/(1-\lambda))$-consistent algorithm, where $n$ is the number of variables.
Subsequent work by Grigorescu et al.~\shortcite{grigorescu25a} further extended this line of research to more general convex objective functions.
Learning-augmented algorithms have also been investigated in settings where the prediction is not necessarily a feasible solution.
For instance, Ameli et al.~\shortcite{jabal2025learning} studied the case where the prediction consists of the set of requests that will appear in the future, while Nguyen~\shortcite{nguyen2026online} considered online covering problems with multiple experts.

\paragraph{Online Vertex Cover.}
The study of online vertex cover dates back to the work of Demange and Paschos~\shortcite{demange2005line}, who systematically analyzed vertex-by-vertex and cluster-revealing models.
Subsequently, Birmelé, Delbot, and Laforest~\shortcite{birmele2009mean} studied online vertex cover algorithms from the perspective of average-case performance, thereby enriching the understanding of the problem beyond worst-case analysis.
Wang and Wong~\shortcite{wang2015two} gave an algorithm with the optimal competitive ratio of $e/(e-1)$ for bipartite graphs, and obtained a $1.901$-competitive algorithm for the fractional version in general graphs.
Since then, further variants of online vertex cover have also been investigated, including online 3-path vertex cover~\cite{zhang2020algorithm} and online vertex cover with submodular or matroid constraints~\cite{wang2016matroid}.

\section{Preliminaries}\label{sec:preliminaries}
\paragraph{Bipartite and General Graphs.}
Let $G = (U \cup V,E)$ denote a bipartite graph with a nonnegative weight $w_v \ge 0$ for each vertex $v \in U \cup V$.
The vertices in $U$ are offline vertices, with weights known to the algorithm in advance.
The vertices in $V$ are online vertices that arrive one at a time.
Upon the arrival of an online vertex $v \in V$, its neighborhood $N(v) = \{u \in U \mid (u,v) \in E\}$ and its weight $w_v$ are both revealed.
The algorithm needs to make irrevocable decisions immediately at each vertex arrival, while ensuring that every edge revealed thus far is covered.
The objective is to minimize the total weight of the selected vertices.

Unlike in the bipartite setting, the offline vertex set $U$ may be empty in the general-graph model, and edges may connect two online vertices.
Upon the arrival of $v\in V$, the algorithm observes
$N(v)=\{u\in U\cup T\mid (u,v)\in E\}$,
where $T$ is the set of previously arrived online vertices.


\paragraph{Advice.}
When an online vertex $v \in V$ arrives, the algorithm receives a bit $a_v \in \{0,1\}$.
If $a_v = 1$, the advice recommends covering the edges incident to $v$ by selecting $v$ itself.
If $a_v = 0$, the advice recommends covering the edges incident to $v$ by selecting all neighbors of $v$.
The vertex set constructed by following the advice upon each arrival is always a feasible vertex cover, which we call the \textit{advice-induced vertex cover}.
Thus the advice-induced vertex cover is given by
\begin{equation}\label{eq:adv-cover}
\{v\in V\mid a_v=1\}\cup\bigcup_{v\in V:\,a_v=0} N(v).
\end{equation}
Note that if the vertices assigned bit $1$ themselves form a feasible vertex cover, then the advice-induced vertex cover is
$\{v\in V\mid a_v=1\}$.

\paragraph{Performance Measures.}
For an instance $I$, let $\ALG(I)$ denote the cost of a deterministic algorithm and let $\mathbb{E}[\ALG(I)]$ denote the expected cost of a randomized algorithm.
A randomized algorithm is $r$-robust if $\mathbb{E}[\ALG(I)]\le r\OPT(I)$ for every instance $I$ and every advice sequence.
It is $c$-consistent if $\mathbb{E}[\ALG(I)]\le c\ADVICE(I)$ for every instance $I$ and every advice sequence.
For a deterministic algorithm, the same definitions apply without the expectation.

\section{Online Bipartite Vertex Cover with Advice}\label{sec:bi-vc}
In this section, we present a learning-augmented algorithm for online bipartite vertex cover.
Our learning-augmented online algorithm achieves both robustness and consistency.
We first introduce the algorithm, and then prove its robustness and consistency.

\paragraph{Algorithm Description.}
Given a tradeoff parameter $\lambda\in(0,1)$, define the threshold function $f:[0,1]\times\{0,1\}\to\mathbb{R}$ by
$$
f(y,a)=
\begin{cases}
y+\alpha(\lambda), & a=0,\\
y+\beta(\lambda), & a=1,
\end{cases}
$$
where $\alpha(\lambda)=1/(1-e^{-\lambda})-1$, and $\beta(\lambda)=\lambda/(1-e^{-\lambda})-1$.
The main idea of our algorithm is to maintain a fractional value $y_u$ for every revealed vertex $u$, interpreted as the potential already placed on $u$.
When an online vertex $v$ arrives, the algorithm updates the potentials of its revealed neighborhood $N(v)$ by a water-filling rule.
For a target level $y$, the quantity
$$
\sum_{u\in N(v)} w_u\max\{y-y_u,0\}
$$
is the total additional weighted potential needed to raise all neighbors below level $y$ up to $y$.

The function $f(y,a_v)$ determines how much potential the algorithm is willing to spend on $N(v)$ at level $y$, with the advice $a_v$ controlling this investment.
For fixed advice, $f(y,a_v)$ increases with $y$, allowing more potential when the target level is higher.
Since $\alpha(\lambda)>\beta(\lambda)$, the threshold is larger when $a_v=0$ and smaller when $a_v=1$.
Thus, if the advice suggests selecting the neighbors of $v$, the algorithm invests more in $N(v)$ and may choose a larger level $y$.
If the advice suggests selecting $v$, it invests less in the neighbors and chooses a smaller feasible level.

The algorithm chooses the largest feasible level $y\leq 1$, raises each neighbor $u\in N(v)$ to $\max\{y_u,y\}$, and assigns the remaining potential to the arriving vertex by setting $y_v=1-y$.
Hence every new edge is fractionally covered upon the arrival of $v$, and previously covered edges remain covered because potentials only increase.

Finally, the algorithm converts the fractional solution into an integral vertex cover using one random threshold $t \sim \mathrm{Unif}[0,1]$ sampled between 0 and 1 at the beginning.
After computing $y$ for an arriving vertex $v$, it selects $v$ if $y \le t$.
Otherwise, the algorithm selects all vertices in $N(v)$.
In this way, every newly revealed edge $(u,v)$ is guaranteed to be covered.
For completeness, the pseudocode is given in Algorithm~\ref{alg:bi-vc}.
\begin{algorithm}[tbp]\small
    \caption{Online Bipartite Vertex Cover with Advice}
	\begin{algorithmic}[1]
        \STATE {\bfseries Input:} Online graph $G=(U\cup V,E)$, weights $\{w_v\}_{v\in U\cup V}$, advice $a$, parameter $\lambda$
        \STATE {\bfseries Output:} A vertex cover $C$
        \STATE Sample $t$ uniformly from $[0,1]$
        \STATE $C\leftarrow\emptyset$
        \STATE Initialize $y_u=0$ for each $u\in U$
        \FOR {each online vertex $v$ and its advice $a_v$}
            \STATE Maximize $y\leq 1$ subject to
            $$
            \sum_{u\in N(v)}w_u\max\{y-y_u,0\}\leq w_vf(y,a_v)
            $$ \label{step:max-y}
            \STATE For each $u\in N(v)$, set $y_u\leftarrow \max\{y_u,y\}$\label{step:potential-L}
            \STATE Set $y_v\leftarrow 1-y$\label{step:potential-R}
            \IF{$y\leq t$}
                \STATE $C\leftarrow C\cup\{v\}$\label{step:round-R}
            \ELSE
                \STATE $C\leftarrow C\cup N(v)$\label{step:round-L}
            \ENDIF
        \ENDFOR
        \RETURN $C$
	\end{algorithmic}
	\label{alg:bi-vc}
\end{algorithm}

\paragraph{Robustness and Consistency Analysis.}
We now establish the robustness and consistency guarantees of the algorithm.
For any vertex subset $S \subseteq U \cup V$, define the notation $w(S) := \sum_{v \in S} w_v$.
We next establish two lemmas, whose proofs are deferred to the supplementary material.
We first show that the randomized rounding preserves the weighted fractional cost in expectation.

\begin{lemma}\label{lem:bi-vc-exp-cost}
The online vertex cover selected by Algorithm~\ref{alg:bi-vc} has an expected cost of $\sum_{v\in U\cup V}w_v y_v$.
\end{lemma}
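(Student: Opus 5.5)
By linearity of expectation, it suffices to show that each vertex $x \in U\cup V$ lies in the output cover $C$ with probability exactly $y_x$, where $y_x$ denotes its final potential. We handle online and offline vertices separately. Throughout, $t\sim\mathrm{Unif}[0,1]$ is the single threshold sampled at the start, and all potentials are deterministic functions of the input and the advice, so they do not depend on $t$.

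\emph{Online vertices.} Take $v\in V$ and let $y^{(v)}$ be the level chosen for it in line~\ref{step:max-y}. Only $v$'s own arrival can add $v$ to $C$, because in the bipartite model $v$ is never a neighbor of a later online vertex. Hence $v\in C$ if and only if $y^{(v)}\le t$, and this event has probability $1-y^{(v)}$. The potential $y_v$ is set to $1-y^{(v)}$ in line~\ref{step:potential-R} and never modified afterwards, so $\Pr[v\in C]=y_v$. We also need $y^{(v)}\in[0,1]$. The upper bound is imposed explicitly. For the lower bound, note that $y=0$ always satisfies the constraint: the left-hand side is $0$ because $y_u\ge 0$, and the right-hand side is $w_v\beta(\lambda)\ge 0$ since $\lambda\le e^\lambda-1$ gives $\beta(\lambda)\ge 0$. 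The maximizer is attained because the left-hand side is continuous and nondecreasing in $y$.

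\emph{Offline vertices.} Take $u\in U$. It enters $C$ exactly when some arriving neighbor $v$ has $y^{(v)}>t$. Because line~\ref{step:potential-L} sets $y_u\leftarrow\max\{y_u,y^{(v)}\}$ for every neighbor $v$, and $y_u$ starts at $0$, the final potential is $y_u=\max_{v\in V:\,u\in N(v)}y^{(v)}$, with the convention that this is $0$ if $u$ has no online neighbor. So
\[
u\in C\iff t<\max_{v:\,u\in N(v)} y^{(v)} = y_u ,
\]
and since $y_u\in[0,1]$, this happens with probability $y_u$. The whole argument rests on this step: because a single shared $t$ is used, the union of the events $\{t<y^{(v)}\}$ over the neighbors of $u$ collapses into the single event $\{t<\max_v y^{(v)}\}$. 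With independent thresholds the probability would instead exceed $y_u$.

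Summing over all vertices, $\mathbb{E}[w(C)]=\sum_{x\in U\cup V}w_x\Pr[x\in C]=\sum_x w_x y_x$. The only other point to check is that $C$ is a set, so a vertex selected several times is counted once. This is what makes the max in the offline case correct, and the online case is unaffected since each online vertex can be selected only at its own arrival.
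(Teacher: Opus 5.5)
Your proposal is correct and follows the same route as the paper. You show $\mathrm{Pr}[x\in C]=y_x$ vertex by vertex, using $\mathrm{Pr}[t\ge y]=1-y$ for online vertices and the single shared threshold to reduce the offline event to $\{t<y_u\}$, then apply linearity of expectation, spelling out details (the max over neighbors, online vertices never reselected) that the paper leaves implicit. One small slip: $\beta(\lambda)\ge 0$ is equivalent to $e^{-\lambda}\ge 1-\lambda$, not to $\lambda\le e^{\lambda}-1$. Both are instances of $e^x\ge 1+x$, so your conclusion stands.
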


We next show that the level chosen in Step~\ref{step:max-y} either reaches $1$ or makes the constraint tight, as expected from the water-filling stopping rule.

\begin{lemma}\label{lem:com-y}
Let $y$ be an optimal solution in Step~\ref{step:max-y}.
Then either $y=1$ or $\sum_{u\in N(v)}w_u\max\{y-y_u,0\}=w_vf(y,a_v)$.
\end{lemma}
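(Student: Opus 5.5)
We prove Lemma~\ref{lem:com-y} by writing the constraint in Step~\ref{step:max-y} as $g(y)\le h(y)$, with
$$
g(y)=\sum_{u\in N(v)}w_u\max\{y-y_u,0\}, \qquad h(y)=w_vf(y,a_v)=w_v\bigl(y+\gamma\bigr),
$$
where $\gamma\in\{\alpha(\lambda),\beta(\lambda)\}$ is determined by $a_v$. The plan is to check that the feasible set is nonempty and closed, so the maximum exists. Then we argue by contradiction: if the maximizer $y^\ast<1$ has strict slack, continuity gives a slightly larger feasible level.

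\textbf{Existence of the maximizer.} Both $g$ and $h$ are continuous in $y$: $g$ is a finite sum of continuous piecewise-linear functions, and $h$ is affine. Hence $F=\{y\le 1: g(y)\le h(y)\}$ is closed. For nonemptiness, take $y=0$. Potentials are nonnegative, since they start at $0$ and only increase or are set to $1-y\ge 0$. Therefore $g(0)=0$. Also $h(0)=w_v\gamma\ge 0$, because $\alpha(\lambda),\beta(\lambda)>0$ for $\lambda\in(0,1)$; for $\beta$ this follows from $\lambda>1-e^{-\lambda}$. So $0\in F$.

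For boundedness, $F\subseteq(-\infty,1]$ is bounded above. I would note that the algorithm implicitly intends $y\in[0,1]$. If needed, I would restrict to $[0,1]$, which is compact, so the maximum $y^\ast$ is attained. (If the domain were unrestricted below, $F$ is still bounded above and closed, and its supremum lies in $F$.)

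\textbf{Complementary slackness.} Suppose $y^\ast<1$ and $g(y^\ast)\ne h(y^\ast)$. Since $y^\ast$ is feasible, we must have $g(y^\ast)<h(y^\ast)$. By continuity of $h-g$ there is $\epsilon>0$ such that $y^\ast+\epsilon\le 1$ and $g(y^\ast+\epsilon)<h(y^\ast+\epsilon)$. This contradicts the maximality of $y^\ast$. Hence either $y^\ast=1$ or equality holds.

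A degenerate point to address is $w_v=0$. Then $h\equiv 0$, and the constraint reads $g(y)\le 0$. The argument above still goes through, because it uses only continuity and feasibility of $0$.

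The only step needing care is existence: confirming the potentials are nonnegative and the thresholds nonnegative, so that $y=0$ is feasible. The slackness step itself is routine.
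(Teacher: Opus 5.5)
Your proposal is correct and follows essentially the same route as the paper: if $y<1$ and the constraint had strict slack, continuity would let $y$ be raised slightly while staying feasible, contradicting maximality. Your extra existence step (feasibility of $y=0$ from nonnegative potentials and $\alpha(\lambda),\beta(\lambda)>0$, plus closedness of the feasible set) is valid. The paper skips it because the lemma already presupposes an optimal $y$.
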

Thanks to Lemma~\ref{lem:bi-vc-exp-cost} and~\ref{lem:com-y}, we can establish the robustness and consistency for Algorithm~\ref{alg:bi-vc}.
\begin{theorem}\label{thm:bigraph-alg-robust-cnst}
Given any tradeoff parameter $\lambda \in (0,1)$, Algorithm~\ref{alg:bi-vc} is $\frac{1}{1-e^{-\lambda}}$-robust and $\frac{\lambda}{1-e^{-\lambda}}$-consistent for online bipartite vertex cover.
\end{theorem}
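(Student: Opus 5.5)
The plan is a charging argument on the fractional solution. By Lemma~\ref{lem:bi-vc-exp-cost}, the expected cost of the algorithm is $\sum_{v} w_v y_v$. This sum is built up step by step: when $v$ arrives, the potential grows by $\Delta_v := \sum_{u\in N(v)} w_u\max\{y-y_u,0\} + w_v(1-y)$. I will distribute $\Delta_v$ among vertices of a reference cover ($\OPT$ for robustness, the advice-induced cover for consistency) so that each vertex $x$ of that cover receives at most $c\, w_x$. By Lemma~\ref{lem:com-y} there are two cases for each step.

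\emph{Case $y=1$.} Then $y_v=0$, so $\Delta_v$ consists only of the increments on $N(v)$. These satisfy $\Delta_v\le w_v f(1,a_v)\le w_v(1+\alpha)$.

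\emph{Case $y<1$ (tight constraint).} Then $w_v = \big(\sum_{u} w_u\max\{y-y_u,0\}\big)/(y+\gamma)$, where $\gamma\in\{\alpha,\beta\}$ is the constant selected by $a_v$. Hence $\Delta_v = w_v(1+\gamma)$. Equivalently,
$$\Delta_v = \sum_{u\in N(v)} w_u \max\{y-y_u,0\}\cdot\frac{1+\gamma}{y+\gamma}.$$

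This gives two ways to pay for a step. A \emph{self-charge} puts all of $\Delta_v$ on $v$, at cost at most $w_v(1+\gamma)$. A \emph{neighbor-charge} makes each $u$ pay for raising its level from $y_u$ to $y$, at rate $\frac{1+\gamma}{y+\gamma}\le\frac{1+\gamma}{z+\gamma}$ per unit of weighted level at every intermediate level $z\le y$. In the case $y=1$ the neighbor-charge rate is simply $1$. Since levels of $u$ only increase and never exceed $1$, each piece of level $[z,z+dz]$ of $u$ is raised at most once. Therefore the total neighbor-charge on $u$ is at most $w_u\int_0^1 \max_\gamma \frac{1+\gamma}{z+\gamma}\,dz$, where the maximum is over the $\gamma$'s actually charged to $u$.

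\textbf{Consistency.} If $a_v=1$, self-charge to $v$, which lies in the advice cover; the cost is at most $w_v\max\{1+\beta,\, f(1,1)\}=w_v(1+\beta)=w_v\frac{\lambda}{1-e^{-\lambda}}$. If $a_v=0$, all of $N(v)$ lies in the advice cover, so neighbor-charge with $\gamma=\alpha$. Then each $u$ receives at most
$$w_u\int_0^1\frac{1+\alpha}{z+\alpha}\,dz=w_u(1+\alpha)\ln\frac{1+\alpha}{\alpha}.$$
Since $\alpha=\frac{e^{-\lambda}}{1-e^{-\lambda}}$, we have $(1+\alpha)/\alpha=e^{\lambda}$, so this equals $w_u\frac{\lambda}{1-e^{-\lambda}}$.

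\textbf{Robustness.} Fix an optimal cover $C^*$. If $v\in C^*$, self-charge, at cost at most $w_v(1+\alpha)=w_v\frac{1}{1-e^{-\lambda}}$. Otherwise $N(v)\subseteq C^*$, so neighbor-charge. Here $u$ may be charged by steps with either $\gamma$. Since $\beta<\alpha$ implies $\frac{1+\beta}{z+\beta}\ge\frac{1+\alpha}{z+\alpha}$, each $u$ receives at most $w_u(1+\beta)\ln\frac{1+\beta}{\beta}$. Note that $\beta>0$ because $1-e^{-\lambda}<\lambda$. Substituting, it remains to verify
$$\lambda\ln\frac{\lambda}{\lambda-1+e^{-\lambda}}\le 1\quad\text{for } \lambda\in(0,1),$$
which would give $(1+\beta)\ln\frac{1+\beta}{\beta}\le\frac{1}{1-e^{-\lambda}}$.

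This scalar inequality is the step I expect to be the main obstacle. It is tight at $\lambda\to1$, where the left side equals $\ln e=1$. For small $\lambda$ the left side behaves like $\lambda\ln(2/\lambda)\to0$. I would prove it by showing that $g(\lambda)=\lambda\ln\lambda-\lambda\ln(\lambda-1+e^{-\lambda})$ is nondecreasing on $(0,1]$ by differentiating, or else via the bound $\lambda-1+e^{-\lambda}\ge\lambda e^{-1/\lambda}$.

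If the inequality turns out too weak, the fallback is a finer accounting of the neighbor-charge. Instead of taking a pointwise maximum over $\gamma$, I would use the fact that a unit of level raised by an $a_v=1$ step came from a vertex whose own self-share $w_v(1-y)$ is smaller. The aim would be to show that the per-level rate never exceeds what the inequality requires.
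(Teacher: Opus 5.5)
Your proposal is correct and is essentially the paper's own charging argument. Your per-level rate $\frac{1+\gamma}{z+\gamma}$ is exactly the paper's $1+\frac{1-z}{z+\gamma}$ (direct increment plus proportional share of $w_v(1-y)$), and the rest matches too: self-charge for $v\in S$, neighbor-charge for $v\notin S$, the same integral bound, and replacing $c(a,\lambda)$ by $\beta$ for robustness and by $\alpha$ for consistency.

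The scalar inequality you flag is asserted in the paper without proof, and your planned route settles it. Set $\phi(\lambda)=\ln(\lambda-1+e^{-\lambda})-\ln\lambda+1/\lambda$, so that $\phi(1)=0$. The condition $\phi'(\lambda)\le 0$ reduces to $e^{\lambda}\le 1+\lambda+\lambda^2$, which holds on $(0,1]$. Hence $\phi\ge 0$ there, which is exactly $\lambda\ln\frac{\lambda}{\lambda-1+e^{-\lambda}}\le 1$, so no fallback is needed.
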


\begin{proof}
Let $\{y_v\}_{v\in U\cup V}$ denote the fractional vertex cover maintained by the algorithm.
By Lemma~\ref{lem:bi-vc-exp-cost}, we only need to upper bound the weighted fractional value $\sum_v w_v y_v$.
Given any vertex cover $S$ of $G$, we charge the total weighted fractional value generated by Algorithm~\ref{alg:bi-vc} to vertices in $S$.
Consider the arrival of an online vertex $v \in V$ with advice $a_v$.
Let $x_u$ be the value of $u \in N(v)$ before the update, and let $y$ be the level chosen by Algorithm~\ref{alg:bi-vc}.
Define
$$
d_u=(y-x_u)^+,\qquad
D_v=\sum_{u\in N(v)} w_u d_u .
$$
Then Algorithm~\ref{alg:bi-vc} increases the weighted fractional value of offline vertices by $D_v$ and assigns weighted value $w_vy_v=w_v(1-y)$ to $v$.
Denoting by $c(a,\lambda) = (1-a) \cdot \alpha(\lambda) + a \cdot \beta(\lambda)$, we have $f(y,a) = y + c(a,\lambda)$.

If $v \in S$, we charge both $w_vy_v$ and all weighted increments $\{w_ud_u\}_{u\in N(v)}$ to $v$.
Since $D_v\le w_v \cdot f(y,a_v)$ (see Step~\ref{step:max-y}), the total charge to $v$ is at most
\begin{equation}
\begin{aligned}\label{neq:pot-online}
w_v \cdot (1-y) + D_v&\le w_v \cdot (1-y+f(y,a_v))\\
&= w_v \cdot (1+c(a_v,\lambda)).
\end{aligned}
\end{equation}

If $v \notin S$, since $S$ is a feasible vertex cover, we have $N(v) \subseteq S$.
We charge each weighted increment $w_ud_u$ to $u\in N(v)$.
We also distribute the weighted value $w_v(1-y)$ of the online vertex among the vertices in $N(v)$ proportionally to their weighted increments.
Namely, for each $u\in N(v)$ we charge an additional amount
$$
w_u \cdot \frac{1-y}{f(y,a_v)} \cdot d_u
$$
to $u$.
By Lemma~\ref{lem:com-y}, we have either $y=1$ or $D_v=w_vf(y,a_v)$.
If $D_v = w_v f(y,a_v)$, these additional charges sum to $w_v(1-y)=w_vy_v$.
If $y=1$, then $y_v=0$, and there is no online value to charge.
Thus every unit of weighted fractional value is charged to some vertex in $S$.

It remains to upper bound the charge received by each vertex in $S$ relative to its weight.
For $a\in\{0,1\}$, the function $(1-t)/(t+c(a,\lambda))$ is non-increasing in $t$.
Hence, whenever an offline vertex $u$ is raised from level $x$ to level $y$, the additional charge assigned to $u$ due to the online vertex satisfies
$$
w_u \cdot \frac{1-y}{f(y,a)} \cdot (y-x)\le w_u\int_x^y \frac{1-t}{f(t,a)}\,dt.
$$
Moreover, the total direct weighted increments charged to $u$ are at most $w_u$, since the level of $u$ never exceeds $1$.
Fix an offline vertex $u\in U\cap S$.
Consider only the arrivals of online vertices $v\notin S$ that raise the value of $u$.
Let these vertices be $v_1,\ldots,v_k$ in chronological order, and let $a_i$ be the advice bit of $v_i$.
Let $x_i$ and $x_{i+1}$ denote the value of $u$ immediately before and after processing $v_i$ respectively.
The total charge to $u$ is at most
\begin{equation}\label{neq:pot-offline}
w_u\left(1+\sum_{i=1}^k \int_{x_i}^{x_{i+1}}\frac{1-t}{t+c(a_i,\lambda)}\,dt\right).
\end{equation}
\begin{claim}\label{clm:bi-vc-bound}
Given any feasible vertex cover $S$, the expectation of total weight of the vertex cover selected by Algorithm~\ref{alg:bi-vc} is at most $\sum_{u \in S}$~(\ref{neq:pot-offline}).
\end{claim}

Using Claim~\ref{clm:bi-vc-bound}, we derive the two guarantees on robustness and consistency.
For robustness guarantee, let $S$ denote a minimum-weight vertex cover in $\OPT$.
Since $\beta(\lambda)\le \alpha(\lambda)$, every online vertex $v \in S$ receives charge at most $w_v(1+\alpha(\lambda))$ by~(\ref{neq:pot-online}).
For an offline vertex $u \in U \cap S$, $c(a,\lambda)\geq \beta(\lambda)$.
Thus, applying inequality~(\ref{neq:pot-offline}) with $c(a,\lambda)$ replaced by its lower bound $\beta(\lambda)$, the total charge to $u$ is at most
\begin{align*}
w_u\left(1+\int_0^1 \frac{1-t}{t+\beta(\lambda)}\,dt\right)
&\le w_u(1+\alpha(\lambda)).
\end{align*}
Hence
$$
\sum_{z\in U\cup V} w_z y_z
\le (1+\alpha(\lambda)) \cdot w(S)
= (1+\alpha(\lambda)) \cdot \OPT,
$$
which proves $1/(1-e^{-\lambda})$-robustness.

For consistency guarantee, let $S$ denote the advice-induced vertex cover.
If an online vertex $v\in V$ belongs to $S$, then its advice is $a_v=1$, and the charge to $v$ is at most $w_v(1+\beta(\lambda))$ by inequality~(\ref{neq:pot-online}).
If $v\notin S$, then $a_v=0$.
Hence, for every offline vertex $u\in U\cap S$, all additional distributed charges come only from steps with $c(a_v,\lambda)=\alpha(\lambda)$.
Therefore, by applying inequality~(\ref{neq:pot-offline}), the total charge to $u$ is at most
$$
w_u\left(1+\int_0^1 \frac{1-t}{t+\alpha(\lambda)}\,dt\right)
= w_u \cdot (1+\beta(\lambda)).
$$
Hence
$$
\sum_{z\in U\cup V} w_z y_z
\le (1+\beta(\lambda)) \cdot w(S)
= (1+\beta(\lambda)) \cdot \ADVICE,
$$
which proves $\lambda/(1-e^{-\lambda})$-consistency.
\end{proof}

\begin{remark}
The parameter $\lambda$ controls the consistency--robustness tradeoff.
As $\lambda\to 0$, the consistency ratio $\lambda/(1-e^{-\lambda})$ tends to $1$, while the robustness ratio $1/(1-e^{-\lambda})$ becomes unbounded.
As $\lambda\to 1$, robustness approaches $e/(e-1)$, and the algorithm reduces to a random $e/(e-1)$-competitive online bipartite vertex cover algorithm, matching the known lower bound~\cite{wang2015two}.
\end{remark}

\section{Online Vertex Cover with Advice}\label{sec:vc}
In this section, we present our deterministic algorithm for online vertex cover in general graphs with advice.
The general-graph model extends the bipartite model by allowing edges between online vertices.
In particular, when $U=\emptyset$, the graph is induced entirely by the online vertex set $V$.
In the following, we first describe the proposed algorithm, with its pseudocode given in Algorithm~\ref{alg:vc}, and then analyze its robustness and consistency. 

\paragraph{Algorithm Description.}
For each offline or already revealed vertex $z$, the algorithm maintains a load $\ell(z)$ and a threshold $\tau(z)$.
Initially, for every offline vertex $u \in U$, $\ell(u) = 0$, and $\tau(u) = w_u / \lambda$.
Upon the arrival of an online vertex $v$, $\ell(v) = 0$, and
$$
\tau(v) = \begin{cases}
w_v / \lambda, & a_v = 0,\\
w_v, & a_v = 1.
\end{cases}
$$
The threshold encodes the advice-induced preference as selection difficulty: advice-recommended vertices are assigned lower thresholds and hence easier to select, whereas non-recommended vertices require more accumulated load before being selected.
The parameter $\lambda$ controls this bias: smaller $\lambda$ increases reliance on the advice, while larger $\lambda$ weakens it and improves robustness.

Whenever a newly revealed edge $(u,v)$ is still uncovered, the algorithm increases the loads of both endpoints at the same rate until one endpoint reaches its current threshold.
Every endpoint whose load reaches its threshold is added to the online vertex cover, which covers the edge.
\begin{algorithm}[tbp]\small
    \caption{Online Vertex Cover with Advice}
    \label{alg:vc}
	\begin{algorithmic}[1]
        \STATE {\bfseries Input:} Online graph $G=(U\cup V,E)$, weights $\{w_z\}_{z\in U\cup V}$, advice $a$, parameter $\lambda$
        \STATE {\bfseries Output:} A vertex cover $C$
        \STATE $C\leftarrow \emptyset$
        \FOR{each offline vertex $u\in U$}
            \STATE $\ell(u)\leftarrow 0$, $\tau(u)\leftarrow w_u/\lambda$
        \ENDFOR
        \FOR{each online vertex $v\in V$ and its advice $a_v$}
            \STATE $\ell(v)\leftarrow 0$, $\tau(v)\leftarrow (1-a_v) \cdot w_v/\lambda + a_v \cdot w_v$
            \FOR{each newly revealed edge $(u,v)$}
                \IF{$a_v=0$}
                    \STATE $\tau(u)\leftarrow w_u$ \label{step:lower-threshold}
                    \IF{$u\notin C$ and $\ell(u)\ge \tau(u)$}
                        \STATE $C\leftarrow C\cup\{u\}$ \label{step:add-lowered-neighbor}
                    \ENDIF
                \ENDIF
                \IF{$u,v\notin C$}
                    \STATE $\delta\leftarrow \min\{\tau(u)-\ell(u),\tau(v)-\ell(v)\}$\label{step:load}
                    \STATE $\ell(u),\ell(v)\leftarrow \ell(u)+\delta,\ell(v)+\delta$ \label{step:incre-ell}
                    \STATE Add every $z\in\{u,v\}$ with $\ell(z)=\tau(z)$ to $C$ \label{step:add-to-C}
                \ENDIF
            \ENDFOR
        \ENDFOR
        \RETURN $C$
	\end{algorithmic}
\end{algorithm}

\paragraph{Robustness and Consistency Analysis.}
We now establish the robustness and consistency of the algorithm.
\begin{theorem}\label{thm:graph-alg-robust-cnst}
Given any tradeoff parameter $\lambda\in (0,1)$, Algorithm~\ref{alg:vc} is $(1+\frac{1}{\lambda})$-robust and $(1+\lambda)$-consistent for online vertex cover on general graphs.
\end{theorem}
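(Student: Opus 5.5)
The plan is a primal-dual charging argument in which the loads serve as dual edge variables. For every execution of Step~\ref{step:incre-ell} on an edge $e=(u,v)$, let $\delta_e$ denote the increment. Then $\ell(z)=\sum_{e\ni z}\delta_e$ for every vertex $z$. The first step is to establish two invariants.

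\emph{(i) No overshoot.} A vertex's load only grows while the vertex is outside $C$, and the minimum in Step~\ref{step:load} ensures $\ell(z)\le\tau(z)\le w_z/\lambda$ at all times. The one exception is a lowered vertex, whose load may exceed its new threshold $w_z$; such a vertex is added to $C$ immediately in Step~\ref{step:add-lowered-neighbor} and receives no further load.

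\emph{(ii) Paid cost.} Every $z\in C$ satisfies $\ell(z)\ge\tau(z)$ at the moment it is added. Hence $w_z=\lambda\ell(z)$ if its threshold was still $w_z/\lambda$, and $w_z\le\ell(z)$ if its threshold was $w_z$.

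Using (ii), write $\ALG\le\sum_e\delta_e(c_u+c_v)$, where $c_z\in\{0,\lambda,1\}$ records whether $z$ is unselected, selected at the high threshold, or selected at the low threshold. The remaining step is to redistribute each term $\delta_e(c_u+c_v)$ onto an endpoint of $e$ lying in a reference cover $S$. This is possible because every increment happens on an edge of $G$, and $S$ covers it.

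\textbf{Robustness.} Take $S$ to be an optimal cover. For $z\in S$ the total incident increment is at most $\ell(z)\le w_z/\lambda$ by invariant (i). It then suffices to show that the total charge routed to $z$ is at most $(1+\frac1\lambda)w_z$, by a case analysis on the threshold history of $z$.
\begin{itemize}
\item If $z$ always keeps threshold $w_z/\lambda$, then $c_z\le\lambda$. Each unit of load on $z$ carries charge at most $c_z+c_{\text{other}}\le\lambda+1$, and the load is at most $w_z/\lambda$, giving a total of at most $(1+\lambda)w_z/\lambda=(1+\frac1\lambda)w_z$.
\item If $z$ has threshold $w_z$, whether from $a_z=1$ or from lowering, then $z$ is paid at most $w_z$ in total. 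The partners' shares over the edges of $z$ sum to at most $\ell(z)\le w_z/\lambda$, again giving at most $(1+\frac1\lambda)w_z$.
\end{itemize}
Rather than bounding per unit of load, I will make this precise by charging each selected vertex's weight $w_{z'}$ to its own increments, split among its edges in proportion to $\delta_e$. I then bound separately, for $z\in S$, the self-part ($\le w_z$) and the partner part ($\le\sum_{e\ni z}\delta_e\le w_z/\lambda$).

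\textbf{Consistency.} Take $S$ to be the advice-induced cover~(\ref{eq:adv-cover}). Every $z\in S$ eventually has threshold $w_z$: either $a_z=1$, or $z$ is lowered when a neighbor with advice $0$ arrives. The goal is a total charge to $z$ of at most $(1+\lambda)w_z$, i.e., self-part at most $w_z$ and partner part at most $\lambda w_z$. For the partner part, the idea is that a neighbor $v\notin S$ has $a_v=0$, so $\tau(v)=w_v/\lambda$, and its weight is paid at rate $\lambda$ per unit of load. Edges between two vertices of $S$ can be charged entirely to whichever endpoint absorbs them.

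\textbf{Main obstacle.} The delicate point is the load that $z\in S$ accumulates \emph{before} its threshold is lowered. Such load can reach up to $w_z/\lambda$, which would break the $\lambda w_z$ bound on the partner part. My first attempt is to charge edges differently depending on timing. An edge $(z,v)$ with $v\notin S$ and $a_v=0$ triggers the lowering of $z$ at the arrival of $v$, so every increment on that edge happens after lowering. By invariant (i), the post-lowering load of $z$ is then at most $w_z$, and the partners' rate-$\lambda$ payments contribute at most $\lambda w_z$. Pre-lowering increments of $z$ come only from edges to earlier vertices. Whenever the other endpoint lies in $S$, those increments should be charged to that endpoint instead. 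When the other endpoint is outside $S$, it is itself a neighbor with advice $0$ and hence also lowers $z$, so no such pre-lowering edge exists. Verifying this reassignment carefully is where I expect most of the work.
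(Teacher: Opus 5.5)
Your overall route is the paper's. The weight of each $x\in C\setminus S$ is paid by its load, which is charged across edges to neighbours in $S$. Robustness then uses $\ell(y)\le w_y/\lambda$. Consistency uses $w_x=\lambda\ell(x)$ for $x\in C\setminus\widehat C$, plus the fact that $y\in\widehat C$ only absorbs load accumulated while $\tau(y)=w_y$. Your robustness argument, in its precise proportional-splitting form, is fine.

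The gap is in the step you flag as the main obstacle. You claim that an earlier neighbour $v\notin\widehat C$ of $z$ ``is itself a neighbor with advice $0$ and hence also lowers $z$.'' That is false for Algorithm~\ref{alg:vc}. Lowering is applied only to the previously revealed endpoint of an edge revealed when an advice-$0$ vertex arrives. So a vertex that arrived before $z$, or an offline vertex (which has no advice at all), never lowers $\tau(z)$. The correct reason is different: if the edge $(z,v)$ is revealed when $z$ arrives and $v\notin\widehat C$, then $a_z=1$, because $a_z=0$ would force $v\in N(z)\subseteq\widehat C$. Hence $\tau(z)=w_z$ from $z$'s arrival, and there is no high-threshold phase at all. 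Combine this with your correct observation for later neighbours: a later $v\notin\widehat C$ has $a_v=0$ and lowers $z$ before its edge is processed. Together these show that every increment on an edge from $z\in\widehat C$ to $C\setminus\widehat C$ occurs while $\tau(z)=w_z$, so these increments total at most $w_z$. This is exactly the paper's two-case argument.

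Two smaller points also need fixing. First, your statement that pre-lowering increments come only from edges to earlier vertices is wrong. A later neighbour with advice $1$ does not lower $z$ and can still raise $\ell(z)$; this is harmless only because such a neighbour lies in $\widehat C$. Second, the rate-$\lambda$ payment $w_v=\lambda\ell(v)$ for $v\in C\setminus\widehat C$ requires that $\tau(v)$ is never lowered. This holds because a later advice-$0$ neighbour would put $v$ into $\widehat C$, and the same argument covers offline $v$. Saying ``$a_v=0$, so $\tau(v)=w_v/\lambda$'' does not establish it.
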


\begin{proof}
The feasibility of the online vertex cover produced by Algorithm~\ref{alg:vc} follows immediately from Step~\ref{step:load} to~\ref{step:add-to-C}.
Each newly revealed uncovered edge triggers load increases until one of its endpoints is added to $C$, so the edge becomes covered.
Since $C$ is monotone increasing, previously covered edges remain covered.

Next we establish a charging inequality used for both robustness and consistency guarantees.
Let $S$ denote any feasible vertex cover of $G$ and set $B = C \setminus S$.
For every load increment $\delta$ received by a vertex $x \in B$ while processing an edge $(x,y)$, we charge $\delta$ to the other endpoint $y$.
This is well-defined since $S$ covers $(x,y)$ and $x \notin S$, hence $y \in S$.
Let $q_y^S$ denote the total charge received by $y \in S$.
Then every load increment of every vertex in $B$ is charged exactly once, and therefore
\begin{equation}\label{eq:graph-vc-general-charge}
    \sum_{x\in B}\ell(x) = \sum_{y \in S} q_y^S.
\end{equation}
Moreover, by Steps~\ref{step:add-lowered-neighbor} and~\ref{step:add-to-C} of Algorithm~\ref{alg:vc}, we have $\ell(z)\geq w_z$ for any $z \in C$.
Hence
\begin{align}
w(C) &\le w(S) + w(B)\\
&\le w(S) + \sum_{x \in B} \ell(x)\\
&=w(S) + \sum_{y \in S} q_y^S.\label{eq:graph-vc-general-bound}
\end{align}

For robustness guarantee, given any $y \in S$, the charge $q_y^S$ consists only of load increments also received by $y$.
Thus $q_y^S\le \ell(y)$.
Since loads never exceed the largest possible threshold, we have $\ell(y)\le w_y/\lambda$.
Applying \eqref{eq:graph-vc-general-bound} gives
$$
w(C)\le w(S)+\frac{1}{\lambda}w(S)=\left(1+\frac{1}{\lambda}\right)w(S).
$$
Taking $S$ to be an optimal offline vertex cover $\OPT$ proves $\left(1+\frac{1}{\lambda}\right)$-robustness.

For consistency guarantee, let $\widehat C$ denote the advice-induced vertex cover which is given by~(\ref{eq:adv-cover}).
By definition, $w(\widehat C)=\ADVICE$.
Set $S=\widehat C$ and $B=C\setminus \widehat C$.

Note that every vertex $x\in B$ is selected at threshold $w_x/\lambda$.
If $x\in U$, then if some neighbor $v\in V$ of $x$ arrives with $a_v=0$, we would have $x\in \widehat C$, a contradiction.
Therefore, the threshold of $x$ is never lowered to $w_x$.
If $x\in V$, then $a_x=0$ because all online vertices with advice bit $1$ belong to $\widehat C$.
Moreover, no later online vertex $v$ with $a_v=0$ reveals $x$ as a neighbor.
Otherwise $x\in\widehat C$.
Thus the threshold of $x$ is never lowered to $w_x$.
Consequently,
\begin{equation}\label{eq:graph-vc-advice-high-threshold}
w_x=\lambda \cdot  \ell(x),\qquad \forall \, x\in B .
\end{equation}

It remains to bound the charges received by vertices in $\widehat C$.
Fix $y\in \widehat C$, and consider any charge to $y$ generated by an
edge $(x,y)$ with $x\in B=C\setminus \widehat C$. We claim that such a
charge can occur only when $y$ is unselected and its current threshold
is $w_y$.

Indeed, if the edge is revealed when $x$ arrives, then $x\notin \widehat C$ implies $a_x=0$.
Hence, before processing this edge, the algorithm sets the threshold of the already revealed endpoint $y$ to $w_y$.
If this makes $\ell(y)\ge w_y$, then $y$ is immediately selected and the edge is already covered, so no charge is generated.

If the edge is revealed when $y$ arrives, then we must have $a_y=1$;
otherwise $a_y=0$ would imply $x\in N(y)\subseteq \widehat C$, contradicting
$x\in B$. Thus $y$'s threshold is $w_y$ from its arrival. In either
case, every actual charge to $y$ occurs while $y\notin C$ and
$\tau(y)=w_y$.

Each such charge is also a load increment of $y$. Since the algorithm
selects $y$ as soon as its load reaches its current threshold $w_y$,
the total charge received by $y$ is at most $w_y$. Therefore,
\begin{equation}\label{eq:graph-vc-advice-charge-bound}
q_y^{\widehat C} \le w_y,\qquad \forall \, y \in \widehat C .
\end{equation}

Combining \eqref{eq:graph-vc-general-charge}, \eqref{eq:graph-vc-advice-high-threshold}, and
\eqref{eq:graph-vc-advice-charge-bound}, we obtain
$$
w(B)= \lambda\sum_{x\in B}\ell(x)= \lambda\sum_{y\in\widehat C}q_y^{\widehat C}\le \lambda \cdot w(\widehat C).
$$
Thus we have
$$
w(C)\le w(\widehat C)+w(B)\le (1+\lambda) \cdot w(\widehat C)= (1+\lambda) \cdot \ADVICE.
$$
This proves $(1+\lambda)$-consistency and hence Theorem~\ref{thm:graph-alg-robust-cnst}.
\end{proof}

\begin{remark}
The parameter $\lambda$ controls the consistency--robustness tradeoff.
As $\lambda\to 0$, the consistency ratio $1+\lambda$ tends to $1$, while robustness $1+1/\lambda$ becomes unbounded.
As $\lambda\to 1$, robustness tends to $2$; both thresholds equal $w_x$, so the algorithm ignores the advice and becomes the standard deterministic primal-dual $2$-competitive algorithm, matching the known lower bound~\cite{boyar2022relaxing}.
\end{remark}

\section{Lower Bounds for Robustness--Consistency Tradeoffs}\label{sec:tradeoff}
In this section, we show that our algorithms achieve the best possible tradeoff between robustness and consistency.
These two lower-bound proofs are based on a specially constructed star instance, where the advice recommends selecting the center.
This instance is exactly a ski rental instance: selecting leaves corresponds to renting, while selecting the center corresponds to buying.
Robustness on short prefixes prevents the algorithm from buying too early, whereas robustness on long prefixes forces it to buy eventually.
Together, these constraints yield the classical ski rental lower bound and the stated robustness-consistency tradeoff.
Detailed proofs are deferred to the supplementary material.

\begin{theorem}\label{thm:bigraph-lower-bound}
If a randomized learning-augmented online algorithm for online bipartite vertex cover is $\gamma$-robust, then $\gamma \geq e/(e-1)$.
Moreover, any consistency ratio $c$ achieved by such an algorithm must satisfy
$$
c\geq \gamma\log(1+\frac{1}{\gamma-1}).
$$
In particular, for any $\lambda\in (0,1)$, any randomized algorithm achieving robustness $\gamma\leq 1/(1-e^{-\lambda})$ must have consistency $c\geq \lambda/(1-e^{-\lambda})$.
\end{theorem}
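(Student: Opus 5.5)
We reduce to learning-augmented ski rental on a star and then run a continuous Yao-type argument on the algorithm's buying distribution. Take one offline center $u$ with $w_u=1$. Online leaves $v_1,v_2,\dots$ each have weight $\varepsilon=1/n$, each is adjacent only to $u$, and each carries advice $a_{v_i}=0$. With this advice the advice-induced cover is $\{u\}$, so $\ADVICE=1$ on every prefix. The adversary may stop after any number $k$ of leaves, in which case $\OPT=\min\{k\varepsilon,1\}$.

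At each arrival the algorithm must cover the new edge. It does so either by having already selected $u$ or by selecting $v_i$. Once $u$ is selected, no further cost is incurred. We may assume without loss of generality that selecting $u$ is the only "buy" action and that each leaf is otherwise "rented." Because all leaves look identical and the advice is fixed, any randomized algorithm is described by a distribution $p$ over the time (in units of $\varepsilon$) at which it buys $u$. Passing to the limit $n\to\infty$, this becomes a measure on $[0,\infty]$ with cumulative distribution function $F$. Buying at time $s\le x$ costs $s+1$, and not having bought by time $x$ costs $x$.

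\textbf{Robustness constraints.} For every $x\le 1$ we need
\[
\int_0^x (s+1)\,dF(s)+x(1-F(x))\le \gamma x .
\]
For $x\ge 1$ we need $\mathbb{E}[\ALG]\le\gamma$, since $\OPT=1$ there. Consistency with $\ADVICE=1$, on the instance with infinitely many leaves (or a long enough prefix), gives
\[
c\ge \int_0^\infty (s+1)\,dF(s)+\lim_{x\to\infty} x(1-F(x)),
\]
which forces all of the mass to be placed at finite times.

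\textbf{Bounding the consistency.} The goal is to show $c\ge\gamma\log(1+\frac1{\gamma-1})$. I would follow Wei and Zhang's approach. Differentiate the prefix constraint for $x\le1$, or integrate it against a suitable weight, to get an upper bound on how much mass can be placed early, namely $F(x)\le$ the solution of the extremal ODE $F'(x)(1+x)-F'(x)x+\dots$. More concretely, the tight case is $(1-F(x)) + xF'(x)\cdot\text{const}$. Solving the equality version gives the extremal density $dF/ds \propto e^{s/(\ldots)}$ on $[0,\mu]$, where $\mu$ satisfies $\gamma=1/(1-e^{-\mu})$, that is, $\mu=\log(1+\frac1{\gamma-1})$. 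For this density the expected cost on the long instance is $\gamma\mu$. The key step is a weighted-averaging argument: multiply the robustness inequality at each $x\in[0,1]$ by $e^{-x}$-type weights and integrate. This yields a lower bound on $\int(s+1)\,dF$ in terms of $\gamma$ alone, showing that the extremal distribution minimizes the long-run cost.

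\textbf{The remaining claims.} The bound $\gamma\ge e/(e-1)$ follows from the same averaging with $c$ unrestricted. It is also the known lower bound of Wang and Wong~\shortcite{wang2015two}, or of Karlin et al.\ for ski rental. The "in particular" statement follows because $\gamma\log(1+\frac1{\gamma-1})$ is decreasing in $\gamma$. At $\gamma=1/(1-e^{-\lambda})$ we have $1+\frac1{\gamma-1}=e^{\lambda}$, so the bound equals $\lambda/(1-e^{-\lambda})$.

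\textbf{Main obstacle.} The hardest part is choosing the averaging weights correctly and handling the discretization limit $\varepsilon\to0$ rigorously. I expect to import Wei and Zhang's optimal ski-rental lower bound essentially verbatim once the star reduction is established.
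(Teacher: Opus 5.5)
Your setup matches the paper's. You use the star with $w_u=1$, leaves of weight $\epsilon$ and all advice bits $0$ (so $\ADVICE=1$), and you summarize the algorithm by the law of its buying time. You impose robustness with $\OPT=x$ on short prefixes and consistency on long ones, and your monotonicity argument for the ``in particular'' clause is correct. The gap is that the inequality carrying the theorem is never derived. The ODE you write is not a well-formed equation. The one concrete recipe you give, weighting the robustness constraints by $e^{-x}$ over all of $[0,1]$, does not produce $\gamma\mu$. In your notation the short-prefix constraint is $F(x)-\int_0^xF\le(\gamma-1)x$ for $x\le1$. Integrating it against $e^{1-x}$ on $[0,1]$ (any multiple of $e^{-x}$ gives the same) yields only $\int_0^1F\le(\gamma-1)(e-2)$, hence $c\ge 2-(\gamma-1)(e-2)$. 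That is strictly weaker than $\gamma\mu$ once $\gamma>e/(e-1)$ (about $1.28$ versus $1.39$ at $\gamma=2$), and it is vacuous for $\gamma>1+1/(e-2)$.

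The missing idea is truncation at $\mu=\log(1+\frac{1}{\gamma-1})$. Beyond $\mu$ the exponential bound on $F$ exceeds $1$, so there you may only use $1-F\ge0$. Weight by $e^{\mu-x}$ on $[0,\mu]$ alone; this gives every $F(s)$ with $s\le\mu$ coefficient exactly $1$. Using $(\gamma-1)(e^\mu-1)=1$, you get $\int_0^\mu F\le(\gamma-1)(e^\mu-1-\mu)=1-(\gamma-1)\mu$, and therefore $c\ge 1+\int_0^\mu(1-F)\ge\gamma\mu$. This is the paper's proof in dual form. There, discrete Gr\"onwall gives $F_r\le(\gamma-1)\bigl((1-\epsilon)^{-(r+1)}-1\bigr)$, and the sum $\epsilon\sum_j(1-F_j)$ is cut at $R=\lfloor s/\epsilon\rfloor-1$ by discarding the nonnegative tail.

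This step uses the $\OPT=x$ constraints on all of $[0,\mu]$, so it needs $\mu\le1$, i.e.\ $\gamma\ge e/(e-1)$, to be established first. That bound is a prerequisite for the consistency bound, not a separate remaining claim; the paper proves $s\le1$ first for exactly this reason. Citing the classical bound for the algorithm run on all-zero advice is a fine way to get it. Finally, you can avoid the continuum limit you flag as the main obstacle. The paper works directly with $F_r=\mathrm{Pr}[K\le r]$ at fixed $\epsilon$ and lets $\epsilon\to0$ only in the final numerical bound.
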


\begin{theorem}\label{thm:graph-lower-bound}
For any $\lambda\in (0,1)$, if a deterministic learning-augmented online algorithm for online vertex cover on general graphs is $(1+\frac{1}{\lambda})$-robust, then any consistency ratio $c$ achieved by such an algorithm must satisfy
$
c\geq 1+\lambda.
$
\end{theorem}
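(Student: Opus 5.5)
The plan is to embed deterministic learning-augmented ski rental into online vertex cover through a star instance, and then repeat the standard argument of Purohit et al.\ and Wei--Zhang. Take one offline center $u\in U$ with $w_u=1$. Online leaves $v_1,v_2,\dots$ of weight $\varepsilon=1/B$ arrive one at a time, each adjacent only to $u$. The graph is bipartite, so it is also a valid general-graph instance. Every leaf receives advice $a_{v_i}=1$, which recommends selecting the leaf; this is the renting branch. For the consistency instance we will also use a variant in which the advice recommends buying, namely $a_{v_1}=0$ and $a_{v_i}=1$ for $i\ge 2$. Then the advice-induced cover of~(\ref{eq:adv-cover}) is $\{u\}$ together with the leaves $v_i$, $i\ge2$ (their bits are $1$). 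The advice cost is therefore about $1+(n-1)\varepsilon$. To keep $\ADVICE$ close to $1$ we instead use the all-$1$ advice on short instances and the ``select center'' advice on long ones. Since the algorithm is deterministic and cannot distinguish prefixes, we fix the advice adversarially on the first leaf as follows.

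Concretely, the algorithm sees the same history until it selects $u$. Let $k$ be the number of leaves it selects before buying, that is, before selecting $u$; if it never buys, $k=\infty$. Selecting $u$ covers all later leaves for free. Consider two advice patterns.

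In case (A), we take $a_{v_1}=0$ and then $a_{v_i}=1$ for all later leaves, and end the sequence right after the algorithm buys, or after $n\gg B$ leaves. Then $\OPT\le 1$. The advice cover is $\{u\}\cup\{v_i: i\ge2\}$, but the leaves $v_i$ with $i\ge2$ are unnecessary. To avoid paying for them, we instead give $a_{v_i}=0$ to \emph{every} leaf, so that $\ADVICE=w_u=1$. Thus in case (A) all leaves carry $a=0$ and $\ADVICE=1$ regardless of length.

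Under this advice, robustness forces buying. If the algorithm never buys within $n$ leaves, its cost is $n\varepsilon$, while $\OPT\le1$. So $k\varepsilon\le 1+1/\lambda$ must hold, since otherwise after $k$ leaves the cost exceeds $(1+1/\lambda)\OPT$. Robustness also forbids buying too early. If the algorithm buys after $k$ rented leaves and the sequence stops there, then $\OPT=\min(1,(k+1)\varepsilon)$ and the cost is $1+k\varepsilon$. For $k\varepsilon<\lambda$ this gives a ratio of about $(1+k\varepsilon)/(k\varepsilon)>1+1/\lambda$, a violation. Hence $k\varepsilon\ge\lambda-O(\varepsilon)$.

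The consistency instance is the same stream with $n\to$ large. The algorithm's cost is at least $1+k\varepsilon\ge 1+\lambda-O(\varepsilon)$, while $\ADVICE=1$. This gives $c\ge 1+\lambda-O(1/B)$, and letting $B\to\infty$ gives $c\ge1+\lambda$.

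The step I expect to be the main obstacle is the interaction between the two requirements. Robustness is required for every instance and every advice, so the ``buy too early'' argument must be run under the \emph{same} advice sequence (all $a=0$) as the consistency argument, with the sequence truncated right after the algorithm buys. This works because a deterministic algorithm's behavior on a prefix is fixed. I also need to check the partial-selection issue: a general algorithm could select some leaves and not others, but leaf $v_i$ must be covered on arrival, so before buying it must select each arriving leaf. Hence the cost before buying is exactly $k\varepsilon$, and the bound is rigorous.
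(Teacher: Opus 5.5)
Your final argument is correct and is essentially the paper's proof. It uses the star with a weight-$1$ center and weight-$\varepsilon$ leaves, with all advice bits $0$ so that $\ADVICE=1$. Robustness on long prefixes forces an eventual buy. Robustness on the prefix ending at the buy gives $k\varepsilon\ge\lambda-(\lambda+1)\varepsilon$; when $(k+1)\varepsilon\le 1$ this comes from $1+k\varepsilon\le(1+\frac{1}{\lambda})(k+1)\varepsilon$, and that exact inequality should replace your ``about $(1+k\varepsilon)/(k\varepsilon)$''. Consistency on a long prefix then yields $c\ge 1+\lambda-O(\varepsilon)$.

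In a write-up, delete the opening all-$1$ advice and the idea of using different advice on short and long instances. That idea would be invalid anyway, since the algorithm observes the advice bits and could then tell the instances apart.
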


\begin{remark}
The ski rental tradeoff of~\cite{wei2020optimal} cannot be applied directly to prove Theorems~\ref{thm:bigraph-lower-bound} and~\ref{thm:graph-lower-bound}.
In classical ski rental, the prediction is a scalar estimate of the season length, and consistency is measured against $\OPT$ under perfect prediction.
In our model, however, the prediction is a sequence of local advice bits that induces a feasible vertex cover, and consistency is measured against the advice-induced solution $\ADVICE$.
Thus, our results can be viewed as extending the ski rental robustness-consistency tradeoff to this more general $\ADVICE$-based notion of consistency.
\end{remark}

\section{Experiments}\label{experiment}
In this section, we conduct numerical experiments on both synthetic and real-world datasets to evaluate the performance of Algorithm~\ref{alg:bi-vc} (LA-B) and Algorithm~\ref{alg:vc} (LA-G).
The source code is included in the supplementary archive.

\paragraph{Computational Settings.}
All experiments were conducted on a machine running Windows 11 with an AMD Ryzen 7 5800H CPU and 16 GB of memory.
We ran 100 independent random trials for each experimental setting.

\paragraph{Datasets.}
We evaluate the proposed algorithms on four groups of experiments, covering synthetic and real-world datasets under both bipartite and general-graph models.
For each graph, we randomly designate half of the vertices as offline vertices and the remaining half as online vertices.
The online vertices then arrive in a uniformly random order.
The weight of each vertex is sampled independently from the uniform distribution on $[0,1]$.
For the synthetic datasets, we construct Erdős-Rényi (ER) graphs as follows.
Given the number of vertices $n=1000$ and the edge probability $p \in \{0.1,0.2,0.5\}$, the ER graph contains $500$ offline vertices and $500$ online vertices.
In the bipartite graph model, each online-offline vertex pair is connected independently with probability $p$.
In the general graph model, each online-online or online-offline vertex pair is connected independently with probability $p$.

We use the nine snapshots in the Oregon-1 dataset, which contain Autonomous Systems peering information inferred from Oregon Route Views between March 31 and May 26, 2001~\cite{leskovec2005graphs,leskovec2016snap}.
Each snapshot contains approximately $10^4$ vertices and $2.2\times10^4$ edges.
In the bipartite graph model, we remove all offline-offline and online-online edges, whereas in the general graph model, we remove all offline-offline edges.

\paragraph{Advice.}
We generate the advice as follows.
For each graph, we first compute an exact offline vertex cover $x$.
In the bipartite graph model, this optimum is obtained by the standard minimum-cut reduction for weighted bipartite vertex cover.
In the general graph model, it is obtained by solving the 0-1 vertex-cover integer program using scipy.optimize.milp of Python 3.9, backed by the HiGHS optimizer.
When an online vertex $v\in V$ arrives, we set $a_v$ to the flipped value of $x_v$, namely $a_v=1-x_v$, with probability $\eta$ as the replacement rate.
Otherwise, we set $a_v=x_v$.
The same generation procedure has also been used in~\cite{grigorescu2022learning}.

\paragraph{Benchmarked Algorithms.}
For the general graph model, we consider three baselines: the \textit{Blind-Following} algorithm, the well-known 2-competitive \textit{Primal-Dual} algorithm~\cite{buchbinder2009online}, and \textit{PDLA}~\cite{bamas2020primal}.
The Blind-Following algorithm follows the advice-induced vertex cover which is given by~(\ref{eq:adv-cover}).
For the bipartite graph model, we further include the \textit{GreedyAllocation} algorithm of Wang and Wong~\shortcite{wang2015two} as an additional baseline; it achieves the optimal competitive ratio of $e/(e-1)$.
Note that GreedyAllocation and the Primal-Dual algorithm do not rely on any predictions, and hence their performance remains unchanged for any replacement rate.
In addition, we consider three values of $\lambda$, namely $0.25$, $0.5$, and $0.75$, to investigate the impact of $\lambda$ on algorithmic performance.
\begin{figure}[t]
\centering
\includegraphics[width=\linewidth,height=0.18\textwidth]{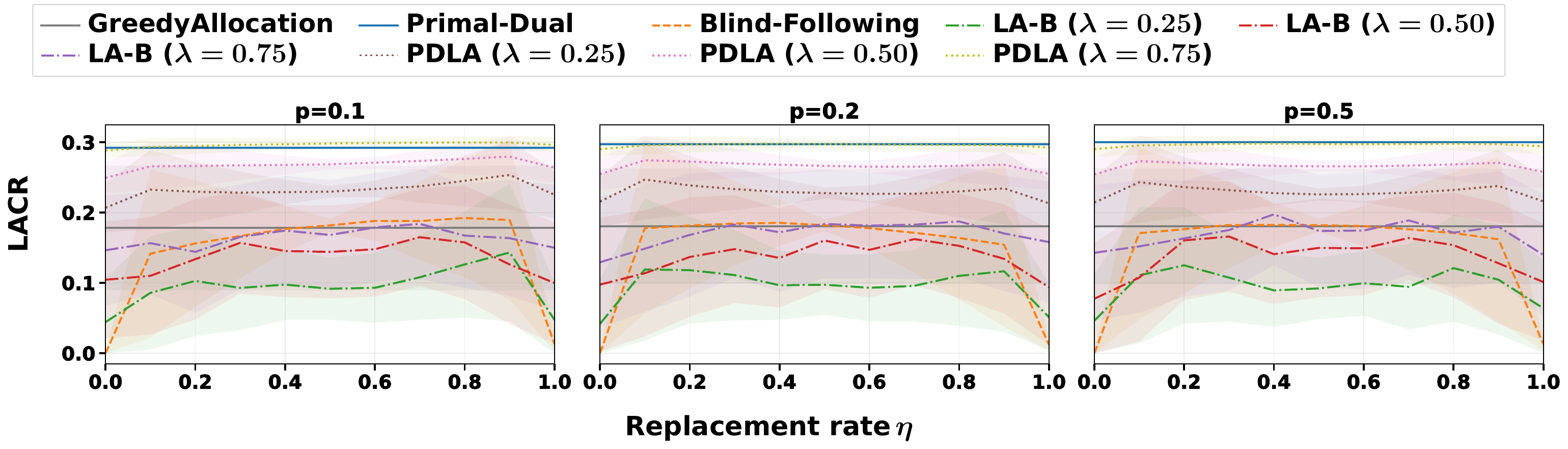}
\caption{LACR of the algorithms on synthetic ER datasets under the bipartite graph model with $p\in\{0.1,0.2,0.5\}$.}
\label{fig:er_bipartite}
\end{figure}

\paragraph{Results.}
The logarithmic average competitive ratio (LACR) is defined as the average, over all trials, of the logarithm of the online algorithm’s total cost divided by the optimal offline cost.
Since the instances are random, we plot the average over 100 trials for each setting of $\lambda$, with the shaded regions showing 1 standard deviation.
Figures~\ref{fig:er_bipartite} and~\ref{fig:er_general} report the LACR of the algorithms on synthetic ER graphs under the bipartite and general graph models respectively, for $p\in\{0.1,0.2,0.5\}$.
Figures~\ref{fig:real_bipartite} and~\ref{fig:real_general} report the corresponding results on the real-world graphs.

The experimental results show that, at low replacement rates, LA-B and LA-G achieve lower empirical competitive ratios than the baselines.
Their performance degrades smoothly as the replacement rate increases and remains comparable to that of the advice-free baselines at high replacement rates.
For each fixed value of $\lambda$ and each replacement rate, our algorithms also outperform PDLA.
\begin{figure}[t]
    \centering
    \includegraphics[width=\linewidth,height=0.18\textwidth]{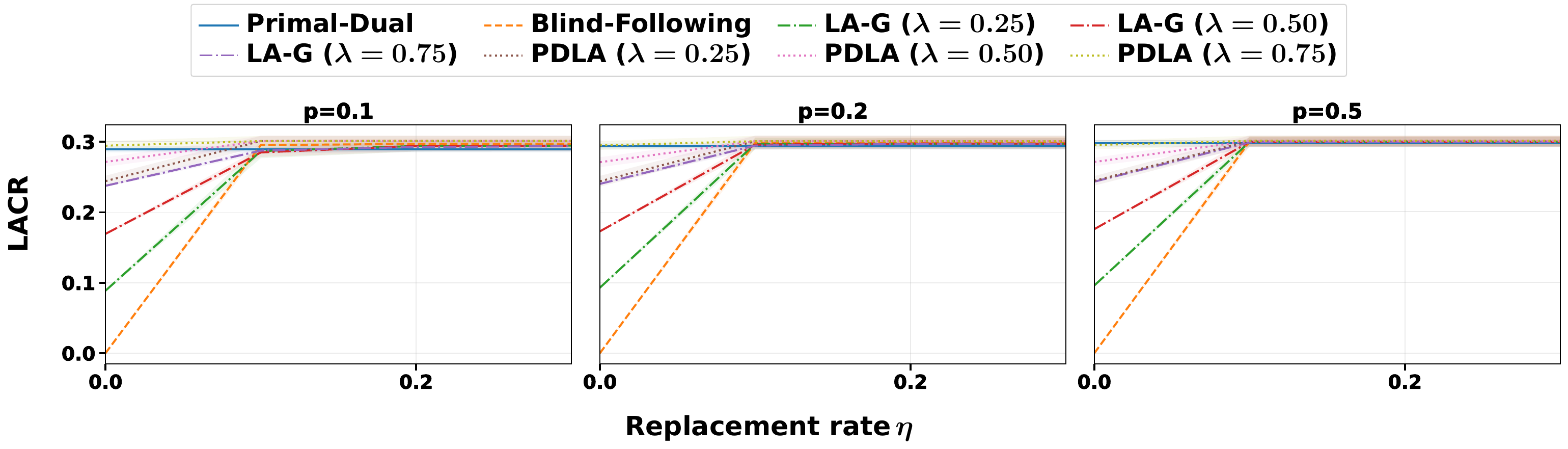}
    \caption{LACR of the algorithms on synthetic ER datasets under the general graph model with $p\in\{0.1,0.2,0.5\}$.}
    \label{fig:er_general}
\end{figure}

\begin{figure}[t]
    \centering
    \begin{subfigure}{0.23\textwidth}
        \centering
        \includegraphics[width=\linewidth,height=1.0\textwidth]{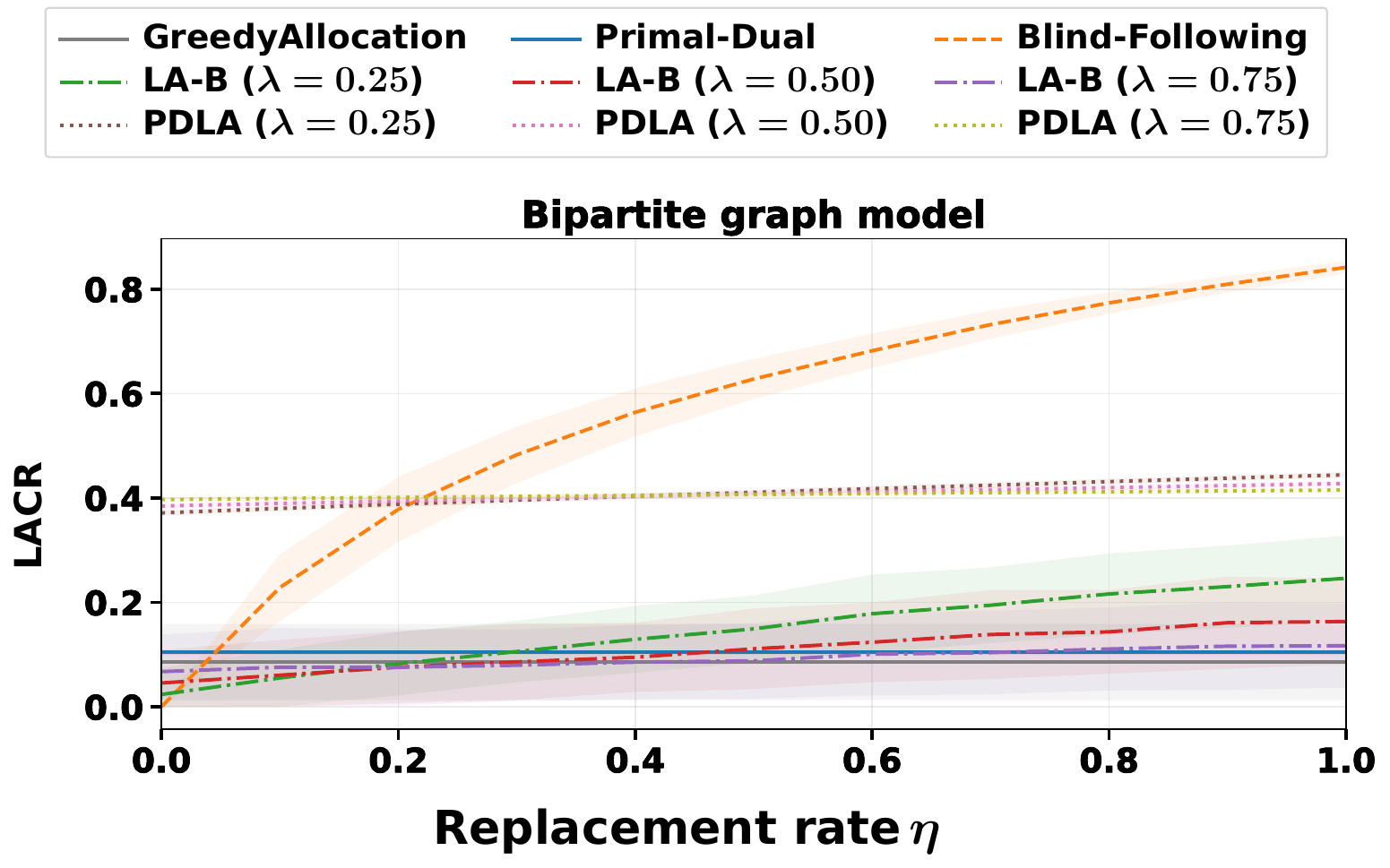}
        \caption{Bipartite}
        \label{fig:real_bipartite}
    \end{subfigure}
    \hfill
    \begin{subfigure}{0.23\textwidth}
        \centering
        \includegraphics[width=\linewidth,height=1.0\textwidth]{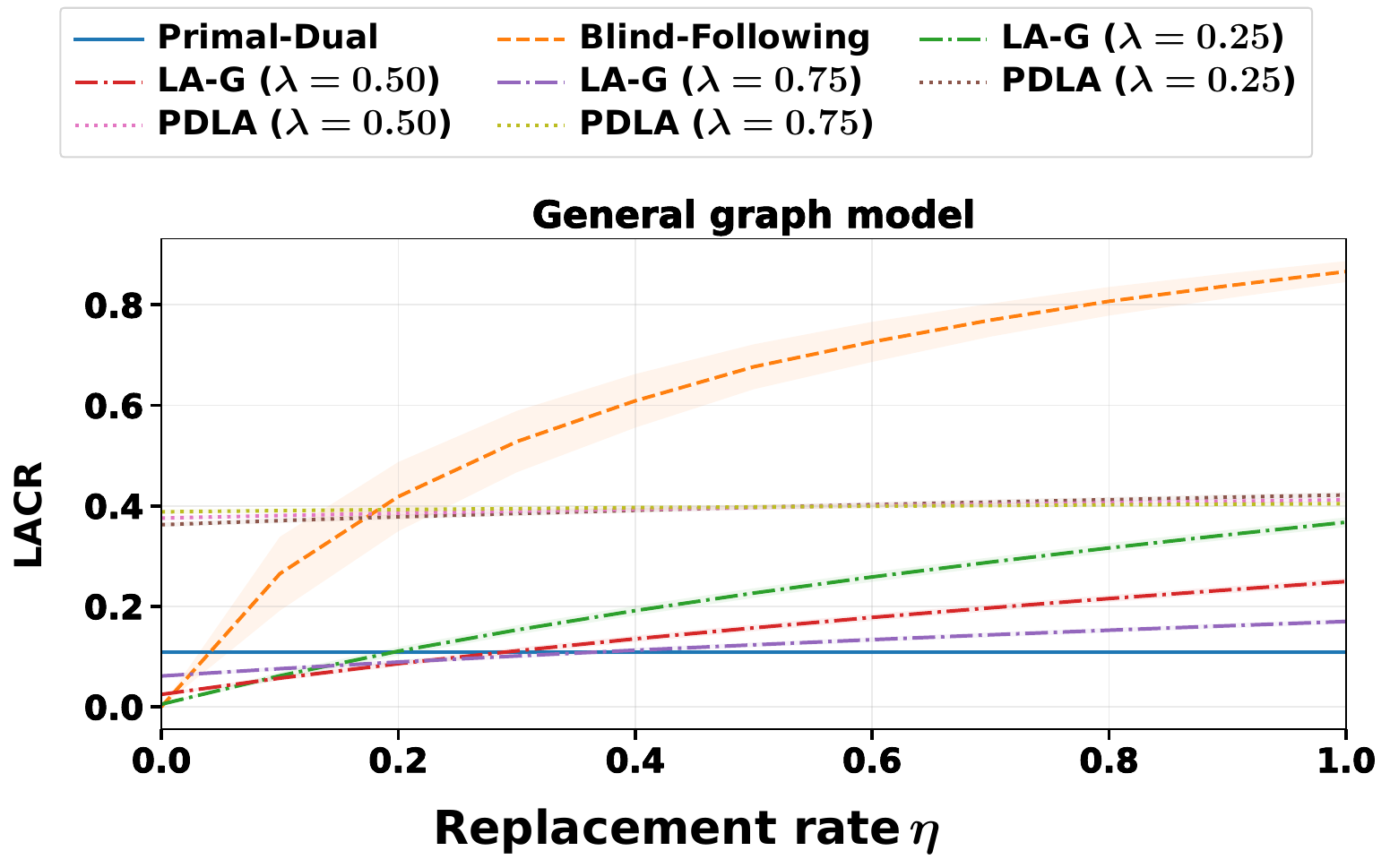}
        \caption{General}
        \label{fig:real_general}
    \end{subfigure}
    \caption{LACR of the algorithms on the Oregon-1 datasets: (a) bipartite graph model; (b) general graph model.}
    \label{fig:real_bipartite-general}
\end{figure}

\section{Conclusion}\label{conclusion}
Motivated by the success of learning-augmented algorithms for the ski rental problem, we study the online vertex cover problem on both bipartite graphs and general graphs. For these two models, we design learning-augmented algorithms that achieve both robustness and consistency.
Furthermore, we investigate the robustness-consistency tradeoffs of learning-augmented algorithms for online bipartite vertex cover and online vertex cover in general graphs.

Future research directions include designing learning-augmented randomized algorithms with improved robustness and consistency.
Moreover, extending the proposed model to nonlinear objective functions, such as submodular functions, is also worth further investigation.
In addition, designing learning-augmented algorithms that can adaptively select the parameter $\lambda$ to improve performance is also an important direction for future work.

\bibliography{aaai2027}


\clearpage
\appendix
\section*{Supplementary Material}

\setcounter{figure}{0}
\setcounter{table}{0}
\setcounter{equation}{0}
\setcounter{algorithm}{0}
\setcounter{listing}{0}
\renewcommand{\thefigure}{S\arabic{figure}}
\renewcommand{\thetable}{S\arabic{table}}
\renewcommand{\theequation}{S\arabic{equation}}
\renewcommand{\thealgorithm}{S\arabic{algorithm}}
\renewcommand{\thelisting}{S\arabic{listing}}

\let\arxivOriginalMaketitle\maketitle
\def\maketitle{}


\title{Supplementary Material}

\author{
Anonymous Submission
}
\affiliations{
}


\maketitle

\paragraph{Proof of Lemma 1. }
Let $\{y_v\}_{v\in U\cup V}$ denote the fractional vertex cover maintained by the algorithm.
For an online vertex $v\in V$, suppose that the algorithm chooses the level $y$ when $v$ arrives.
Since $y=0$ is always feasible, the maximizer satisfies $y\geq 0$.
Then $y_v=1-y$, and $v$ is selected if and only if $y\le t$.
Since $t$ is uniformly sampled from $[0,1]$, we have $\mathrm{Pr}[v\in C]=\mathrm{Pr}[t\ge y]=1-y=y_v$.
For an offline vertex $u\in U$, $u$ is selected if and only if $t<y_u$.
Hence $\mathrm{Pr}[u\in C]=y_u$.
Therefore, we have the following claim:
$$
\mathbb{E}[w(C)]
=\sum_{v\in U\cup V}w_v\mathrm{Pr}[v\in C]
=\sum_{v\in U\cup V}w_v y_v.
$$

\paragraph{Proof of Lemma 2.}
If $y < 1$, we have $D_v = w_v f(y,a_v)$; otherwise, $y$ could still be increased slightly, contradicting the maximality of $y$.
Hence these additional charges sum to $w_v(1-y)=w_vy_v$.
If $y=1$, then $y_v=0$, and there is no online value to charge.
Thus every unit of weighted fractional value is charged to some vertex in $S$.

\paragraph{Proof of Theorem 3. }
Let $A$ be a randomized $\gamma$-robust algorithm.
Consider a star with $U=\{u\}$ and online vertices $V$ with $|V|\geq 1$, where $w_u=1$ and every online vertex has weight $\epsilon\in(0,1)$.
The advice is $a_v=0$ for all $v\in V$, so the advice-induced vertex cover is $\{u\}$.
Run algorithm $A$ on the infinite sequence and let $K$ be the number of online vertices selected before $A$ first selects $u$, with $K=\infty$ if $u$ is never selected.
Define $F_r=\mathrm{Pr}[K\le r]$.
Let $\mathbb{E}[A_m]$ denote the expected cost incurred by algorithm $A$ when $|V|=m$.
\begin{align}
\mathbb{E}[A_m] &\geq\epsilon m\mathrm{Pr}[K\geq m]+\sum_{j=0}^{m-1}(j\epsilon+1)\mathrm{Pr}[K=j]\\
&=\epsilon\sum_{j=0}^{m-1}\mathrm{Pr}[K>j]+\mathrm{Pr}[K<m]\\
&=\epsilon\sum_{j=0}^{m-1}(1-F_j)+F_{m-1}\label{eq:ex-alg}.
\end{align}

First suppose $m \le 1/\epsilon$.
Then the optimum selects all online vertices, so $\mathrm{OPT}=m\epsilon$.
By $\gamma$-robustness, $\mathbb{E}[A_m]\le \gamma m\epsilon$.
Combined with (\ref{eq:ex-alg}), for any $0\leq r\leq 1/\epsilon-1$,
$$
F_r \le\frac{(\gamma-1)(r+1)\epsilon}{1-\epsilon}+\frac{\epsilon}{1-\epsilon}\sum_{j=0}^{r-1}F_j.
$$
Applying the discrete Gr\"onwall inequality~\cite{pachpatte1973discrete}, for any $r\leq 1/\epsilon-1$, we have
$$
F_r\le(\gamma-1)\bigl((1-\epsilon)^{-(r+1)}-1\bigr)=(\gamma-1)\bigl(e^{(r+1)\epsilon}-1+O(\epsilon)\bigr).
$$

Next consider prefixes with $m>1/\epsilon$.
In this case both the offline optimum and the advice-induced vertex cover have cost $1$.
We first prove that $\lim_{m\to\infty}F_m=1$.

Suppose, to the contrary, that this is not the case.
Then there exists $\epsilon_0>0$ such that, for every $N\ge 0$, there exists some $r\ge N$ satisfying $\mathrm{Pr}[K>r]>\epsilon_0$.
Let $N>\max\{1/\epsilon,\gamma/(\epsilon\epsilon_0)\}$.
Since $A$ is $\gamma$-robust, we have
$$
\gamma \ge \mathbb{E}[A_r] \ge \epsilon N\mathrm{Pr}[K>r]> \epsilon \epsilon_0 N> \gamma,
$$
which is a contradiction.
Therefore, $\lim_{m\to\infty}F_m=1$.
Let
$$
s=\log\left(1+\frac{1}{\gamma-1}\right),
\qquad
R=\lfloor s/\epsilon\rfloor-1.
$$
We first note that $s\le 1$.
Taking $m\to\infty$ in~(\ref{eq:ex-alg}) along prefixes with $m>1/\epsilon$, gives
$$
\gamma\geq 1+\epsilon\sum_{j=0}^{\infty}(1-F_j)\geq 1+\epsilon\sum_{j=0}^{R}(1-F_j).
$$
To justify $s\leq 1$, take $R=\lfloor1/\epsilon\rfloor-1$ in the preceding lower bound.
Using $\gamma$-robustness and letting $\epsilon\to 0$, we obtain
$$
\gamma\ge 1+\int_0^1\bigl(1-(\gamma-1)(e^x-1)\bigr)\,dx=2-(\gamma-1)(e-2),
$$
which implies $\gamma\ge e/(e-1)$ and therefore $s\le1$.
Thus $s\leq 1$ and $R\le 1/\epsilon-1$, and the bound on $F_j$ applies.
Therefore, as $\epsilon\to0$,
\begin{align*}
\liminf_{\epsilon\to 0}\liminf_{m\to\infty}\mathbb{E}[A_m]&\ge 1+\int_0^s \bigl(1-(\gamma-1)(e^x-1)\bigr)\,dx\\
&=\gamma s=\gamma\log\left(1+\frac{1}{\gamma-1}\right).
\end{align*}
Since the advice cost is $1$, any consistency ratio $c$ satisfies
$
c\ge\gamma\log\left(1+\frac{1}{\gamma-1}\right)$, and the theorem follows.

\paragraph{Proof of Theorem 4}
Let $A$ be a deterministic $(1+1/\lambda)$-robust algorithm.
Fix $\epsilon \in (0,\lambda/(\lambda+1))$.
Consider a star with $U=\{u\}$ and online vertices $V=\{v_1,\ldots,v_m\}$ where $w_u=1$ and every online vertex has weight $\epsilon$.
Every edge $(u,v_i)$ is revealed when $v_i$ arrives.
The advice is $a_v=0$ for all $v\in V$, so the advice-induced vertex cover is $\{u\}$.

We first note that $A$ cannot select $u$ before or upon the arrival of the first edge.
On the instance containing only $u$ and $v_1$, selecting $u$ gives $\ALG\ge 1$, whereas $\OPT=\epsilon$, contradicting
$(1+1/\lambda)$-robustness since
$$
1 > \Bigl(1+\frac1\lambda\Bigr)\epsilon .
$$
Thus $A$ must select $v_1$.

On the infinite sequence, $A$ must eventually select $u$.
Otherwise, if $A$ never selects $u$, then for every $m$, on the length-$m$ prefix it selects all $m$ leaves and pays $m\epsilon$.
Since $\OPT$ is $1$ for $m>1/\epsilon$, choosing $m>(1+1/\lambda)/\epsilon$ violates robustness.
Let $K$ be the number of leaves selected before $A$ first selects $u$.
Since $A$ must select $v_1$, we have $K\ge 1$.
We claim that
$$
K\epsilon \ge \lambda-(\lambda+1)\epsilon .
$$

If $(K+1)\epsilon\le 1$, consider the prefix ending at $v_{K+1}$. On this
prefix, $A$ incurs at least $1+K\epsilon$, while the optimum selects all
$K+1$ leaves and has cost $(K+1)\epsilon$. Robustness gives
\[
1+K\epsilon
\le
\Bigl(1+\frac1\lambda\Bigr)(K+1)\epsilon,
\]
which rearranges to the claimed bound. If instead $(K+1)\epsilon>1$, then
$K\epsilon>1-\epsilon\ge \lambda-(\lambda+1)\epsilon$, since $\lambda<1$.

Finally, take a finite prefix with $m>\max\{K,1/\epsilon\}$. Both the offline
optimum and the advice-induced vertex cover select only $u$, so $\OPT=\ADVICE=1$.
However, before selecting $u$, the algorithm has already selected $K$ leaves,
and hence
\[
c\geq \ALG/\ADVICE\ge 1+K\epsilon
\ge 1+\lambda-(\lambda+1)\epsilon .
\]
Letting $\epsilon\to 0$ yields $c\ge 1+\lambda$.

\bibliography{aaai2027}

\let\maketitle\arxivOriginalMaketitle
\let\bibliography\arxivOriginalBibliography

\bibliography{aaai2027}

\end{document}